\title{Physical properties as modal operators in the topos approach to  quantum mechanics}
\author{{\sc H. Freytes}\footnote{Fellow of the Consejo Nacional de Investigaciones Cient\'{\i}ficas y
T\'{e}cnicas (CONICET)}\ $^{1,2}$, \  {\sc G. Domenech}$^{4}$ \ and
\ {\sc C. de Ronde}$^{*}$$^{3,4,5}$}
\date{}
\begin{document}

\bibliographystyle{plain}

\maketitle

\begin{center}

\begin{small}
1. Dipartimento di Filosofia, Universit\`{a} di Cagliari \\ Viale
Merello 92, 09123, Cagliari - Italia.\\
2. Departamento de Matem\'{a}tica UNR-CONICET \\ Av. Pellegrini 250, CP 2000 Rosario - Argentina.\\
3. Instituto de Filosof\'{\i}a ``Dr. Alejandro Korn'' UBA-CONICET \\ Puan 480, Buenos Aires - Argentina.\\
4. Center Leo Apostel (CLEA)\\
5. Foundations of  the Exact Sciences (FUND) \\
Brussels Free University  Krijgskundestraat 33, 1160 Brussels -
Belgium
\end{small}
\end{center}

\begin{abstract}

\noindent In the framework of the topos approach to quantum
mechanics we give a representation of physical properties in terms
of modal operators on Heyting algebras. It allows us to introduce a
classical type study of the mentioned properties.
\end{abstract}

\begin{small}

\noindent {\em Keywords: Intuitionistic quantum logic, quantum phase
spaces, modal operators}

\noindent {\em PACS numbers: 02.10.De, 02.10.Ab}

\end{small}

\bibliography{pom}

\newtheorem{theo}{Theorem}[section]

\newtheorem{definition}[theo]{Definition}

\newtheorem{lem}[theo]{Lemma}

\newtheorem{prop}[theo]{Proposition}

\newtheorem{coro}[theo]{Corollary}

\newtheorem{exam}[theo]{Example}

\newtheorem{rema}[theo]{Remark}{\hspace*{4mm}}

\newtheorem{example}[theo]{Example}

\newcommand{\proof}{\noindent {\em Proof:\/}{\hspace*{4mm}}}

\newcommand{\qed}{\hfill$\Box$}

\section*{Introduction}

Quantum mechanics (QM) is unanimously recognized to be one of the
most successful physical theories ever, but parallel to this it is
also widely acknowledged that many aspects of what quantum theory
means remain unexplained and obscure. And, although quite some
aspects that originally were considered problems ---e.g., the
Schr\"odinger cat situation related to the linearity of the
Schr\"odinger equation which gives rise to the superposition
principle--- are nowadays resources of technical applications,
nevertheless there still is a lack of a semantics and a conceptual
language for QM that would explain what the theory is talking about.

In the last years several approaches using category theory have been
used to search for an adequate and rigorous language for quantum
systems. First, both from a neo-realist point of view \cite{DI, HLS,
isham} or not \cite{ZK}, there are attempts  that relate algebraic
QM to topos theory, either recognizing the associated intrinsic
intuitionistic logic or equipping the structure with an external
intuitionistic logic. In these approaches, the quantum analogue of
classical phase space is captured by the notion of frame; i.e., a
complete Heyting algebra. There are also other schemes related to category theory which
attempt to describe several aspects of QM. For example,
contextuality and non-locality may be modeled using the framework of
sheaf theory \cite{AB} while monoidal categories can be used for
representing processes \cite{AC08}. This approach also enables a
consistent description of compound systems \cite{chk13}, a deep
difficulty for standard quantum logic (QL).

In this work we expose some logical characteristics related to the
intuitionistic approach to quantum phase spaces when the topos
approach to QM is considered. Moreover, we provide a representation
of physical  properties  as modal operators in a Heyting structure.
This representation allows us to analyze the classical and quantum
aspects of properties in terms of logical consequences. The paper is
organized as follows. In Section $1$, we recall some notions about
Heyting algebras and frames. In Section $2$, we review some basic
facts about algebraic approaches to QM. In Section $3$, we introduce
the Heyting algebra associated to the phase space when the topos
approach to quantum systems is considered. Section $4$ is dedicated
to the study of a representation of properties as modal operators in
a Heyting algebra. In this framework, we define a type of classical
interpretation for these properties. This interpretation will
describe semantic aspects related to the intuitionistic logic
associated to the phase space of the system. Finally, in Section $5$
we provide the conclusions.

\section{Heyting algebras and frames}

We recall from \cite{BD,Bur} some notions of universal algebra and
Heyting algebras that will play an important role in what follows.
Let $A$ be a non-empty set and $n$ be a non-negative integer. An
{\it $n$-ary} operation on $A$ is a function of the form $f:A^n
\rightarrow A$. In this way, $n$ is said to be the {\it arity of
$f$}. A {\it type} of algebras is a (possible infinite) sequence of
natural numbers $\tau = \{n_1, n_2 \ldots  \}$. An {\it algebra of
type $\tau$} is a pair $\langle A, F \rangle$ where, $A$ is a
non-empty set and $F = \{f_{n_1}, f_{n_2} \ldots\}$ is a set of
operation on $A$ such that $arity(f_i) = n_i$. The operations in $F$
are called {\it $\tau$-operations}. Let $A$ and $B$ be two algebras
of the same type $\tau$. A function $\alpha: A \rightarrow B$ is
{\it $\tau$-homomorphism} iff it preserves the $\tau$-operations.
Let $A$ be an algebra and $X\subseteq A$. We denote by $G_A(X)$ the
subalgebra of $A$ generated by the set $X$, i.e., the  smallest
subalgebra of $A$ containing $X$.

\begin{prop}\label{SG}
Let $A,B$ algebras of type $\tau$. Let $X$ be a subset of $A$ and $f:A\rightarrow B$ be a $\tau$-homomorphism. Then,

\begin{enumerate}
\item
$f(G_A(X)) = G_B(f(X))$ {\rm \cite[Theorem: 6.6 ]{Bur}}.

\item
If $A = G_A(X)$ and $g:A\rightarrow B$ is a $\tau$-homomorphism such
that the restrictions $g_{/X}$,  $f_{/X}$ coincides. Then, $f=g$
{\rm \cite[Theorem: 6.2 ]{Bur}}.
\end{enumerate}
\qed
\end{prop}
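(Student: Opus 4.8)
The plan is to reduce the whole statement to two elementary facts about $\tau$-homomorphisms: (i) if $S$ is a subalgebra of $A$ then $f(S)$ is a subalgebra of $B$, and (ii) if $T$ is a subalgebra of $B$ then $f^{-1}(T)$ is a subalgebra of $A$. Both follow immediately from the definition of $\tau$-homomorphism by pushing $f$ through (resp.\ pulling it back along) each $\tau$-operation $f_i$ of arity $n_i$, together with the fact that homomorphisms preserve the interpretations of nullary symbols. Granting (i) and (ii), each item is a short minimality argument.

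For item $1$, I would prove the two inclusions separately. Since $G_A(X)$ is a subalgebra of $A$, fact (i) shows $f(G_A(X))$ is a subalgebra of $B$, and it clearly contains $f(X)$ because $X\subseteq G_A(X)$; as $G_B(f(X))$ is by definition the smallest subalgebra of $B$ containing $f(X)$, this gives $G_B(f(X))\subseteq f(G_A(X))$. Conversely, by fact (ii) the set $f^{-1}(G_B(f(X)))$ is a subalgebra of $A$, and it contains $X$ since $f(X)\subseteq G_B(f(X))$; minimality of $G_A(X)$ then yields $G_A(X)\subseteq f^{-1}(G_B(f(X)))$, i.e.\ $f(G_A(X))\subseteq G_B(f(X))$. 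Combining the two inclusions gives the equality. (Equivalently, one may argue by induction along the stratification $X=X_0\subseteq X_1\subseteq\cdots$, where $X_{k+1}$ is obtained from $X_k$ by applying all $\tau$-operations, checking that $f(X_k)$ coincides with the $k$-th stratum generated by $f(X)$; this is the content of \cite[Theorem 6.6]{Bur}.)

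For item $2$, I would introduce the equalizer $E=\{a\in A : f(a)=g(a)\}$. A one-line computation shows $E$ is closed under every $\tau$-operation: if $a_1,\dots,a_{n_i}\in E$ then
\[
f\bigl(f_i^{A}(a_1,\dots,a_{n_i})\bigr)
= f_i^{B}\bigl(f(a_1),\dots,f(a_{n_i})\bigr)
= f_i^{B}\bigl(g(a_1),\dots,g(a_{n_i})\bigr)
= g\bigl(f_i^{A}(a_1,\dots,a_{n_i})\bigr),
\]
so $f_i^{A}(a_1,\dots,a_{n_i})\in E$; hence $E$ is a subalgebra of $A$. By hypothesis $X\subseteq E$, and since $A=G_A(X)$ is the least subalgebra containing $X$ we conclude $A\subseteq E$, that is $f=g$, which is \cite[Theorem 6.2]{Bur}. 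The proposition is routine, and the only point deserving attention is the degenerate case where $\tau$ contains nullary symbols and $X=\emptyset$: there $G_A(\emptyset)$ is the subalgebra generated by the constants rather than the empty set, and $E$ automatically contains every constant because homomorphisms preserve them, so both arguments still go through verbatim. Since the result is standard, we simply quote it from \cite{Bur}.
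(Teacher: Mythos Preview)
Your argument is correct: the image/preimage minimality argument for item~1 and the equalizer argument for item~2 are the standard proofs, and your handling of nullary operations is fine. There is nothing to compare against, however, since the paper does not supply a proof at all---it merely cites \cite[Theorems 6.2 and 6.6]{Bur} and places a \qed. Your write-up thus goes beyond what the paper does by spelling out the (routine) details; your final sentence, noting that one can simply quote \cite{Bur}, is exactly what the paper itself does.
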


Let ${\cal A}$ be a category of algebras of type $\tau$ whose arrows
are the $\tau$-homomorphisms between algebras of ${\cal A}$. ${\cal
A}$ is a {\it variety} iff its objects form a class defined by
equations. It is well known that, if ${\cal A}$ is a variety, then
monomorphisms in ${\cal A}$ are exactly injective
$\tau$-homomorphisms.

An algebra $A \in {\cal A}$ is {\it injective in ${\cal A}$} iff,
for every monomorphism $f:B\rightarrow C$ and every homomorphism
$g:B \rightarrow A$, there exists a homomorphism $h:C\rightarrow A$
such that $g= hf$.

We shall focus our interest in two varieties, the variety of bounded
distributive lattices and the variety of Heyting algebras. The
following result will be used in Section 5.
\begin{theo}\label{INJ} {\rm \cite[V.9.-Theorem: 3]{BD}}
An algebra is injective in the variety of bounded distributive lattices iff it is a complete Boolean algebra.
\qed
\end{theo}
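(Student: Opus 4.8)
Since the statement is an ``iff'', I would prove the two implications by quite different means.

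\emph{Complete Boolean $\Rightarrow$ injective.} Fix a complete Boolean algebra $B$. As monomorphisms in this variety are exactly the injective homomorphisms, I may assume I am given a sublattice inclusion $L\subseteq M$ of bounded distributive lattices and a homomorphism $g\colon L\to B$, and I must build $h\colon M\to B$ with $h|_{L}=g$. The plan is a transfinite one-point-at-a-time extension: apply Zorn's lemma to the poset of pairs $(N,g_N)$ with $L\subseteq N\subseteq M$ a sublattice and $g_N\colon N\to B$ a homomorphism extending $g$, ordered by extension (unions handle chains), and show that a maximal $(N,g_N)$ has $N=M$. If some $m\in M\setminus N$ existed, I would set $a=\bigvee\{\,g_N(x):x\in N,\ x\le m\,\}$ and $b=\bigwedge\{\,g_N(y):y\in N,\ y\ge m\,\}$, which exist because $B$ is complete (the index sets contain $g_N(0)$ and $g_N(1)$) and satisfy $a\le b$ since $x\le m\le y$ gives $g_N(x)\le g_N(y)$. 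For any $c$ with $a\le c\le b$, distributivity shows the sublattice generated by $N\cup\{m\}$ is $\{\,(u\wedge m)\vee v:u,v\in N\,\}$, and one extends $g_N$ to it by $(u\wedge m)\vee v\mapsto(g_N(u)\wedge c)\vee g_N(v)$, with $m\mapsto c$. The hard part will be checking that this assignment is well defined --- that $(u_1\wedge m)\vee v_1=(u_2\wedge m)\vee v_2$ forces the two candidate images to agree; I would first reduce to the case $v_i\le u_i$, then to $u_1=u_2=u_1\wedge u_2$, and close the gap using exactly the inequalities $a\le c\le b$. (In substance this is Sikorski's extension theorem, with completeness of $B$ carrying the whole argument.) Granting well-definedness and the homomorphism property, $g_N$ extends strictly, contradicting maximality, so $N=M$.

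\emph{Injective $\Rightarrow$ complete Boolean.} Let $L$ be injective in the variety. First I would embed $L$ into a power-set algebra: by the prime ideal theorem for distributive lattices, the Stone map $e\colon L\to{\cal P}(X)$, $x\mapsto\{P\in X:x\in P\}$, where $X$ is the set of prime filters of $L$, is an injective homomorphism of bounded distributive lattices, hence a monomorphism in the variety. Injectivity of $L$ then yields $r\colon{\cal P}(X)\to L$ with $r\circ e=\mathrm{id}_{L}$, so $L$ is a retract of ${\cal P}(X)$, and the remainder is two routine retract arguments. First, $L$ is Boolean: for $x\in L$ put $x^{*}=r\big(X\setminus e(x)\big)$; since $r$ is a homomorphism with $r(X)=1$ and $r(\emptyset)=0$, one gets $x\vee x^{*}=r\big(e(x)\cup(X\setminus e(x))\big)=1$ and $x\wedge x^{*}=r\big(e(x)\cap(X\setminus e(x))\big)=0$, so $x^{*}$ complements $x$, and a complemented bounded distributive lattice is a Boolean algebra. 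Second, $L$ is complete: for $S\subseteq L$ the element $r\big(\bigcup_{x\in S}e(x)\big)$ is $\sup S$ in $L$ --- it dominates each $x\in S$ because $e(x)\subseteq\bigcup_{y\in S}e(y)$ and $r$ is monotone, and it is least because any upper bound $t$ satisfies $e(x)\subseteq e(t)$ for all $x\in S$ (as $e$ is an order-embedding), whence $\bigcup_{x\in S}e(x)\subseteq e(t)$ and $r\big(\bigcup_{x\in S}e(x)\big)\le r(e(t))=t$. Having all joins, $L$ has all meets as well, so $L$ is a complete Boolean algebra.

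\medskip

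\noindent I expect the only genuine difficulty to be the well-definedness check in the first implication; everything in the second is mechanical once the embedding into a power-set algebra and its retraction are in hand. \qed
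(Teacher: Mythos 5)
The paper itself offers no argument for this statement---it is quoted from Balbes--Dwinger---so your attempt has to stand on its own. Your second implication (injective $\Rightarrow$ complete Boolean) is correct and is the standard retract argument: embed $L$ into a power set ${\cal P}(X)$ via prime filters, use injectivity to obtain a retraction $r$ with $r\circ e=\mathrm{id}_L$, and read off complements and arbitrary joins through $r$. The gap is in the first implication, and it is exactly at the point you defer: it is \emph{not} true that every $c$ with $a\le c\le b$ yields a well-defined extension, so the well-definedness check cannot be ``closed using exactly the inequalities $a\le c\le b$''. Concretely, let $M=B={\cal P}(\{1,2,3,4\})$, let $N$ be the bounded sublattice generated by $u=\{1,2\}$ and $v=\{2,3\}$, let $g_N$ be the inclusion $N\hookrightarrow B$, and let $m=\{2,3,4\}$. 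The elements of $N$ below $m$ are $\emptyset,\{2\},v$, so $a=\{2,3\}$; the only element of $N$ above $m$ is $X=\{1,2,3,4\}$, so $b=X$. Take $c=\{1,2,3\}\in[a,b]$. Since $u\wedge m=\{2\}=u\wedge v\in N$, the element $\{2\}$ has the two representations $(u\wedge m)\vee\emptyset$ and $(\emptyset\wedge m)\vee\{2\}$, to which your formula assigns $(g_N(u)\wedge c)\vee g_N(\emptyset)=\{1,2\}$ and $\{2\}$ respectively. In fact no homomorphic extension with $m\mapsto c$ exists at all, since any such $h$ would force $h(\{2\})=h(u)\wedge h(m)=\{1,2\}\neq g_N(\{2\})$.

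The reason Sikorski's one-point step works is that there the subalgebra $N$ is \emph{Boolean}: a relation such as $u\wedge m\in N$ can be converted, using complements inside $N$, into an element of $N$ lying above $m$, and that is precisely what makes the interval $[a,b]$ encode all constraints on $c$. For a mere bounded distributive sublattice the admissible values of $c$ are cut out by the full family of inequalities $(u_1\wedge m)\vee v_1\le(u_2\wedge m)\vee v_2$, which is strictly stronger than $a\le c\le b$, and proving that this system is consistent is the real content of the theorem---your reduction steps will not recover it. The standard repair (and essentially the route of Balbes--Dwinger) is to reduce to the Boolean case rather than imitate it: pass to the free Boolean extensions $B(L)\subseteq B(M)$ (the Boolean reflection of bounded distributive lattices preserves embeddings), extend $g$ to a Boolean homomorphism $B(L)\rightarrow B$ by the universal property, apply Sikorski's extension theorem there---where the one-point argument with an arbitrary $c\in[a,b]$ is valid---and finally restrict to $M$. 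With that replacement for your first implication, and your second implication as written, the proof is complete.
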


Heyting algebras provide an algebraic semantics for the
intuitionistic propositional calculus presented by Heyting in his
1930's papers \cite{HEY1, HEY2}. A {\it Heyting algebra}  \cite{BD}
is an algebra $ \langle A, \lor, \land, \rightarrow,  0  \rangle$ of
type $ \langle 2, 2,2,0  \rangle$ satisfying the following
equations:

\begin{enumerate}
\item[H1]
$ \langle A, \lor, \land, 0  \rangle$ is a lattice with universal lower bound $0$,

\item[H2]
$x\land y = x\land (x\rightarrow y)$,

\item[H3]
$x\land (y \rightarrow z) = x\land ((x\land y) \rightarrow (x\land z))$,

\item[H4]
$z\land ((x\land y) \rightarrow x) = z$.

\end{enumerate}

\noindent We denote by ${\mathbb{H}}$ the variety of Heyting
algebras. In agreement with the usual Heyting algebraic operations,
we define the negation $\neg x = x\rightarrow 0$ and $1 = \neg 0 $.

In each Heyting algebra $A$, the reduct $ \langle A, \lor, \land, 0,
1  \rangle$ is a bounded distributive lattice. The lattice order,
expressed in terms of the operation $\rightarrow$, is equivalent to
$a\leq b$ iff $1 = a\rightarrow b$. Moreover, for $a,b \in A$,
$a\rightarrow b = \bigvee\{x\in A: x\land a \leq b\}$. Boolean
algebras are Heyting algebras satisfying the equation $x\lor \neg x
= 1$. In this case, the operation $\rightarrow$ satisfies that
$x\rightarrow y = \neg x \lor y$.

Let $A$ be a Heyting algebra and $x\in A$. $x$ is said to be {\it
regular} iff $\neg \neg x = x$. We denote by $Reg(A)$ the set of
regular elements of $A$. $Reg(A)$ is a Boolean algebra under the
operations $x \lor_{R} y = \neg \neg (x \lor y)$, $x \land_{R} y = x
\land y$ and $x\rightarrow_{R} y = \neg \neg(\neg x \lor y)$ {\rm
\cite[IX.5.-Theorem: 3]{BD}}. In general $Reg(A)$ is not a
subalgebra of $A$. $x$ is said to be {\it central} iff $x\lor \neg x
= 1$. The set $Z(A)$ of central elements of $A$ constitutes a
Boolean subalgebra $A$. Note that $Z(A) \subseteq Reg(A)$. In
particular $Z(A) = Reg(A)$ iff the equation $\neg(x\land y) = \neg x
\lor \neg y$ is satisfied in $A$. The following result is well
known:

\begin{prop}\label{BOOLEQ}
Let $A$ be a Heyting algebra, then $A$ is a Boolean algebra iff $A =
Reg(A)$. \qed
\end{prop}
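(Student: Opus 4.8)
The plan is to prove the two implications separately, using only the basic Heyting-algebra identities recalled above together with the characterisation (already quoted in the excerpt) that a Heyting algebra is Boolean precisely when it satisfies $x \lor \neg x = 1$.

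For the implication $A$ Boolean $\Rightarrow A = Reg(A)$, I would argue as follows. In a Boolean algebra the underlying lattice is distributive, so complements are unique, and $\neg x = x \rightarrow 0$ is exactly the complement of $x$, characterised by the pair of equations $x \land \neg x = 0$ and $x \lor \neg x = 1$. Since this pair of conditions is symmetric under interchanging $x$ and $\neg x$, the element $x$ is itself the (unique) complement of $\neg x$; hence $\neg \neg x = x$ for every $x$. Thus every element is regular, i.e. $A \subseteq Reg(A)$, and since the reverse inclusion is trivial, $A = Reg(A)$.

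For the converse $A = Reg(A) \Rightarrow A$ Boolean, fix $x \in A$. First I would recall the De Morgan identity $\neg(a \lor b) = \neg a \land \neg b$, which is valid in every Heyting algebra, together with the identity $y \land \neg y = 0$. Applying the first with $a = x$ and $b = \neg x$ gives $\neg(x \lor \neg x) = \neg x \land \neg \neg x$, and applying the second with $y = \neg x$ gives $\neg x \land \neg \neg x = 0$; hence $\neg(x \lor \neg x) = 0$ and therefore $\neg \neg (x \lor \neg x) = \neg 0 = 1$. Since $A = Reg(A)$, the element $x \lor \neg x$ is regular, so $x \lor \neg x = \neg \neg (x \lor \neg x) = 1$. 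As $x$ was arbitrary, $A$ satisfies $x \lor \neg x = 1$ and is thus a Boolean algebra.

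I do not expect a real obstacle here: the only mildly delicate point is the direction of the De Morgan law being used, namely $\neg(a \lor b) = \neg a \land \neg b$, which does hold in all Heyting algebras — unlike $\neg(a \land b) = \neg a \lor \neg b$, which is exactly the extra equation that forces $Z(A) = Reg(A)$ — and this identity follows immediately from the residuation property $c \land a \le b \iff c \le a \rightarrow b$. Everything else is routine manipulation of the axioms H1--H4 and the definitions $\neg x = x \rightarrow 0$ and $1 = \neg 0$.
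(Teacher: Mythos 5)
Your proof is correct: both directions follow from standard Heyting-algebra facts (uniqueness of complements in a bounded distributive lattice, $x\land\neg x=0$, and the De Morgan law $\neg(a\lor b)=\neg a\land\neg b$, which indeed holds in every Heyting algebra via residuation, unlike its dual). The paper states Proposition \ref{BOOLEQ} as well known and gives no proof, so there is nothing to compare against; your argument is a complete and standard verification of the claim.
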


We have a special interest in the class of complete Heyting
algebras, i.e., Heyting algebras that are complete when considered as
lattices. Complete Heyting algebras are the objects of different
categories. They differ in their arrows, and thus get distinct
names. One of them is the category ${\it frames}$; i.e., complete
Heyting algebras whose arrows, called {\it frame homomorphisms}, are
functions preserving all joins, all finite meets, $0$ and $1$. The
Heyting operation $\rightarrow$ is not generally preserved by frame
homomorphisms. We also note that frames are very important
structures since they allow to study topological spaces in terms of
their open-sets lattices \cite{JOHN}.

\section{Algebraic approaches to quantum mechanics}

In classical physics every system can be described by specifying its
actual properties. Mathematically, this happens by representing the
state of the system  by a point $(p,\ q)$ in the corresponding phase
space $\Gamma$ and its properties by subsets of $\Gamma$, with a
structure of operations compatible with the usual mathematics of set
theory. Consequently, the propositional structure associated with
the properties of a classical system follows the rules of classical
logic. In the orthodox formulation of QM, a pure state of a system
is represented by a ray in the Hilbert space ${\mathcal H}$ and its
physical properties by closed subspaces of ${\mathcal H}$, which
with adequate definitions of meet and join operations give rise to
an orthomodular lattice \cite{MM}. This lattice, denoted by
$\mathcal{L}(\mathcal{H})$, is called the Hilbert lattice associated
to ${\mathcal H}$ and motivates the standard QL introduced in the
thirties by Birkhoff and von Neumann \cite{BIR}.

The traditional version of QL needs careful consideration for
several reasons. From an algebraic point of view, QL is founded on
the orthomodular lattice structure. But it is well known that the
variety of orthomodular lattices is strictly larger than the variety
generated by the Hilbert lattices. Thus, standard QL does not fully
capture the concept of the Hilbert lattice. From a physical point of
view there are several features which must be carefully considered:
if $P$ represents a proposition about the system, in general there
are superposition states in which it is wrong to say that either $P$
or its negation $\neg P$ hold in accordance with the association of
the join operation with the smallest closed subspace including the
projection represented by $P$ and its orthocomplement instead of
with their set theoretical union. However, the orthomodular
structure satisfies the equation $P\lor \neg P = 1$ which is a kind
of law of the excluded middle.  Thus, as discussed in
\cite{QReasoning}, it seems necessary to distinguish the logical law
of excluded middle from the semantic principle in which the truth of
the disjunction implies the truth of at least one of the
members.\footnote{For a discussion about contradiction and
superposition states see \cite{daCostadeRonde}.} Moreover, in spite
of the fact that the meet of its elements is well defined in the
lattice, there are conjunctions of (actual) properties that make no
sense because the corresponding operators do not commute. Thus, the
orthomodular structure shows a kind of conflict with the underlying
physical content of the theory. There is also a well known
difficulty with traditional forms of QL in relation to composite
systems, namely the lack of a canonical formalism for dealing with
the properties of the whole system when given the description of its
components. In fact, if $\mathcal{H}$$_{1}$ and $\mathcal{H}$$_{2}$
are the representatives of two systems, the postulates of QM say
that the tensor product
$\mathcal{H}=\mathcal{H}$$_{1}\otimes\mathcal{H}$$_{2}$ stands for
the representative of the composite. But the naive construction of
the lattice of propositions of the whole as the tensor product of
the lattices of the individuals  fails due to the lack of a product
of lattices \cite{aertsrepmathphys84,aertsjmp84,FR81}, or even
posets \cite{dvu95}. Mathematically, this is the expression of the
fact that the category of Hilbert lattices as objects and lattice
morphisms as arrows does not possess a categorial product due to the
failure of orthocomplementation
\cite{aertsdaub2,aertsQL81,gudderlibro78}. Attempts to vary the
conditions that define the product of lattices have been made
\cite{pulmJMP85}, but in all cases it results that the Hilbert
lattice factorizes only in the case in which one of the factors is a
Boolean lattice or when systems have never interacted, rendering the
construction either trivial or physically useless. For a complete
review, see \cite{dvu95}.\footnote{For the construction of a lattice
using convex sets instead of rays as states, see \cite{DHM}.} In
view of the mentioned characteristics of orthomodular systems of
propositions, there have been attempts to obtain ``more tractable''
structures (see for example \cite{AC08, chk13, DI, HLSW,
HLS}).\footnote{In this line, we have built in a previous paper a QL
that arises from considering a sheaf over a topological space
associated to the Boolean sublattices of the ortholattice of closed
subspaces of ${\mathcal H}$ \cite{contextual}. To do so, we defined
a valuation that respects contextuality (first translating the
Kochen-Specker (KS) theorem to topological terms \cite[Theorem
4.3]{contextual}) and a frame for the Kripke model of the language.
As frames are complete Heyting algebras, the resulting logic is an
intuitionistic one ---with restrictions on the allowed valuations
arising from the KS theorem---, thus it has ``good'' properties as
the distributive lattice structure and a nice definition of the
implication as a residue of the conjunction.}

The algebraic formulation of QM usually starts with the $C^*$-algebra of observables. This is
a complex algebra $A$ that is complete in a norm $\vert\vert \cdot
\vert\vert$ satisfying $\vert\vert xy \vert\vert \leq \vert\vert x
\vert\vert \vert\vert y \vert\vert$ and has an unary involutive
operation $^*$ such that $\vert\vert x^*x  \vert\vert =\vert\vert x
\vert\vert^2$. In this way, a quantum system is mathematically
modeled by a $C^*$-algebra. If $\mathcal{H}$ is a Hilbert space, the
algebra $B({\mathcal{H}})$ of all bounded operators of
$\mathcal{H}$, equipped with the usual norm and adjoint is an
example of $C^*$-algebra. By the Gelfand-Naimark theorem \cite{GEL},
any $C^*$-algebra is isomorphic to a norm-closed self-adjoint
subalgebra of $B({\mathcal{H}})$ for some Hilbert space
$\mathcal{H}$.

A von Neumann algebra $N$ is a special case of $C^*$-algebra $N
\subseteq B({\mathcal{H}})$ equal to its own bicommutant. More
precisely, if $N'$ is the set of all bounded operators on
$\mathcal{H}$ that commute with every element of $N$ then $N'' = N$.
Whereas $C^*$-algebra are usually considered in their norm-topology,
a von Neumann algebra  carries in addition a second interesting
topology, called the weak-topology, in which it is complete as well.
In this topology, one has convergence $x_n \rightarrow x$ iff, for
each density operator $\rho$, $tr\rho (x_n - x) \rightarrow 0$  in
$\mathcal{H}$ where $tr$ is the trace. A general $C^*$-algebra may
not have any nontrivial projections while a von Neumann algebra is
generated by its projections, i.e., elements satisfying $p^2 = p^* =
p$. In a von Neumann algebra, the projections are in natural
correspondence with the closed subspaces of a Hilbert space. In this
way, projections of a von Neumann algebra form a complete
orthomodular lattice. A state in a von Neumann algebra $N$ is a
linear functional $s:N \rightarrow \mathbb{C}$ that is continuous in
the weak topology and such that $s(x^* x) \geq 0$ and $s(1)=1$.

\section{Intuitionistic approach to phase spaces}

In the topos approach to QM \cite{DI, isham, HLS} it is possible to
encode physical properties in a Heyting algebra. This provides an
intuitionistic description for the phase space of the system. More
precisely, in a quantum system represented by a von Neumann algebra
$N$, the abelian subalgebras of $N$ represent {\it contexts} in
which, restricted to the context, the rules of classical logic hold (see for discussion \cite{RFD14}). Let $N$ be a von Neumann
algebra and ${\cal V}(N)$ be a family of commutative subalgebras of
$N$ which share the unit element with $N$. Consider the partial
ordered set $\langle {\cal V}(N), \subseteq \rangle$ viewed as the
small category whose arrows are defined by the partial order
$\subseteq$. In the topos approach the system is modeled in the
category of presheaves

$$\widehat{{\cal V}(N)} ={\bf Set}^{{{\cal V}(N)}^{op}}$$

\noindent Thus, the category $\widehat{{\cal V}(N)}$ can be seen as a category
of sets fibred over the contexts. Let $N$ be an abelian von Neumann
algebra. A {\it multiplicative state} is a state $s$ such that $s(x
y) = s(x)s(y)$. We denote by $\Sigma(N)$ the set of multiplicative
states in $N$ and the weak$^*$ topology is considered in
$\Sigma(N)$.  We recall that if a classical system is  modeled as an
abelian von Neumann algebra $N$,  $\Sigma(N)$ represents the phase
space of the system.

To model a quantum system, the {\it spectral presheaf} defined as the
functor: $$\Sigma: {{\cal V}(N)}^{op} \rightarrow {\bf Set}$$ such
that, ${\cal V}(N)\ni A \mapsto \Sigma(A)$ and, for each arrow
$f:A\rightarrow B$, (i.e., $f$ is the inclusion $A\subseteq B$),
$\Sigma(f)$ is the function $\Sigma(f): \Sigma(B) \rightarrow
\Sigma(A)$ such that $(\Sigma(f))(s) = s\vert_{A}$ is naturally
chosen as the state space.

Let $N$ be an abelian von Neumann algebra and ${\cal P}(N)$ be the
set of projections. Let $P \in {\cal P}(N)$. It is well known that
the set $C_P = \{s\in \Sigma(N): s(P)=1 \}$ is clopen  when the
weak$^*$ topology is considered in $\Sigma(N)$. Moreover, if we
consider the set $Clp(\Sigma(N))$ of all clopen sets in $\Sigma(N)$,
the function $C:{\cal P}(N) \rightarrow Clp(\Sigma(N))$ such that
$C(P) = C_P$ is a bijection.  A {\it clopen subobject} of the
spectral presheaf $\Sigma$ is a subfunctor $T$ of $\Sigma$ such that
for each $A \in {\cal V}(N)$, $T(A) \in Clp(\Sigma(N))$.

When considering $Sub_{cl}(\Sigma)$, the set of clopen subobjects of
$\Sigma$, we can see that,  $Sub_{cl}(\Sigma)$ is a bounded distributive
lattice where the operations $\lor, \land$ over clopen subobjects
are defined pointwise in each subalgebra of ${\cal V}(N)$, $0$ is
the empty subobject and $1= \Sigma$. In \cite[\S  2.3]{DI2} and  \cite[Theorem 2.5]{DI} the following result is proved:

\begin{theo}
$Sub_{cl}(\Sigma)$ is a complete Heyting algebra. \qed
\end{theo}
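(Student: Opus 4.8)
The plan is to exhibit the Heyting implication explicitly and then verify completeness directly on subfunctors. First I would recall that for a von Neumann algebra $N$ and each context $A \in {\cal V}(N)$, the clopen sets $Clp(\Sigma(A))$ already form a complete Heyting algebra (indeed a complete Boolean algebra, being isomorphic via $C$ to the complete orthomodular-but-here-Boolean lattice ${\cal P}(A)$ of projections of the \emph{abelian} algebra $A$). Actually the key observation is that each $Clp(\Sigma(A))$ is a complete Boolean algebra, hence a frame. The lattice $Sub_{cl}(\Sigma)$ is then a sublattice of the product $\prod_{A}Clp(\Sigma(A))$ carved out by the subfunctor condition: a family $(S_A)_A$ with $S_A \in Clp(\Sigma(A))$ is a clopen subobject iff it is compatible with restriction, i.e. for every inclusion $f\colon A\subseteq B$ one has $\Sigma(f)(S_B)\subseteq S_A$ (equivalently $S_B \subseteq \Sigma(f)^{-1}(S_A)$). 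So the whole problem reduces to checking that this "compatible families" subset is closed under arbitrary pointwise joins and admits a residuated implication.

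Next I would establish completeness. Given a family $\{T^{(i)}\}_{i\in I}$ of clopen subobjects, define $T(A) = \bigvee_i T^{(i)}(A)$, the join taken in the complete Boolean algebra $Clp(\Sigma(A))$. One must check this is again a subfunctor with clopen components. Clopenness is immediate since each $Clp(\Sigma(A))$ is closed under arbitrary joins. For functoriality, fix $f\colon A\subseteq B$; since $\Sigma(f)\colon\Sigma(B)\to\Sigma(A)$ is continuous, $\Sigma(f)^{-1}$ preserves arbitrary unions and intersections of clopen sets, so from $T^{(i)}(B)\subseteq \Sigma(f)^{-1}(T^{(i)}(A))$ for each $i$ we get $\bigvee_i T^{(i)}(B)\subseteq \bigvee_i \Sigma(f)^{-1}(T^{(i)}(A)) = \Sigma(f)^{-1}(\bigvee_i T^{(i)}(A))$, which is exactly the subfunctor condition for $T$. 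That $T$ is the least upper bound is clear from the pointwise definition. The bottom and top are the empty subobject and $\Sigma$ itself, already noted to lie in $Sub_{cl}(\Sigma)$.

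Then I would produce the implication. For clopen subobjects $S,T$ I would \emph{attempt} the pointwise formula $(S\Rightarrow T)(A) = (S(A) \rightarrow T(A))$ using the Boolean (hence Heyting) implication in $Clp(\Sigma(A))$; if that fails to be a subfunctor — and one should expect it can fail, since the naive pointwise Heyting arrow is exactly what frame homomorphisms do not preserve — the correct definition is the one forced by completeness, namely $(S\Rightarrow T) = \bigvee\{R \in Sub_{cl}(\Sigma) : R\wedge S \leq T\}$, which exists by the completeness just proved. To confirm this is a genuine Heyting implication I would verify the adjunction $R \leq (S\Rightarrow T)$ iff $R\wedge S\leq T$; the nontrivial direction uses that meets and joins in $Sub_{cl}(\Sigma)$ are computed pointwise together with the infinite distributive law $S(A)\wedge\bigvee_i R_i(A) = \bigvee_i (S(A)\wedge R_i(A))$, valid in each $Clp(\Sigma(A))$. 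Equivalently, and more cleanly, one can observe that the inclusion $Sub_{cl}(\Sigma)\hookrightarrow \prod_A Clp(\Sigma(A))$ has a meet-preserving left adjoint (a "closure"/sheafification sending a pointwise family to the smallest subfunctor containing it), whence $Sub_{cl}(\Sigma)$ inherits the frame structure as a reflective sublocale; this is the standard sheaf-of-locales argument and is what \cite[\S 2.3]{DI2} and \cite[Theorem 2.5]{DI} carry out.

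The main obstacle is precisely the implication: unlike $\vee$ and $\wedge$, it cannot be defined pointwise because restriction maps $\Sigma(f)^{-1}$ need not preserve the Heyting arrow of $Clp(\Sigma(B))$, so one must either go through the "largest $R$ with $R\wedge S\leq T$" construction and verify the residuation law by hand, or invoke the reflective-sublocale machinery. Everything else — distributivity, the lattice axioms, clopenness of joins — is routine given that each fibre $Clp(\Sigma(A))$ is a complete Boolean algebra and restriction along inclusions is given by preimages of a continuous map.
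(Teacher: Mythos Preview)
The paper does not actually prove this theorem: it is stated with an immediate \qed{} after the sentence ``In \cite[\S 2.3]{DI2} and \cite[Theorem 2.5]{DI} the following result is proved'', so there is no in-paper argument to compare against. Your sketch is therefore strictly more than what the paper offers, and you yourself correctly identify at the end that the substance is delegated to the D\"oring--Isham references.

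As a reconstruction of what those references do, your outline is sound: the fibrewise complete Boolean algebras $Clp(\Sigma(A))$, the subfunctor condition $\Sigma(f)(T(B))\subseteq T(A)$, pointwise joins, and the implication obtained either as $\bigvee\{R: R\wedge S\leq T\}$ or via a reflective-sublocale argument are exactly the ingredients. One small point to tighten: you write ``clopenness is immediate since each $Clp(\Sigma(A))$ is closed under arbitrary joins'' and then argue functoriality using that $\Sigma(f)^{-1}$ preserves unions. But the join of an arbitrary family of clopens in the Stonean space $\Sigma(A)$ is the \emph{closure} of the union, not the union itself; the argument still goes through (the union sits inside the closed set $\Sigma(f)^{-1}\bigl(\bigvee_i T^{(i)}(A)\bigr)$, hence so does its closure), but the phrasing conflates the lattice join with the set-theoretic union and should be made explicit. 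With that adjustment your proposal is a correct expansion of what the paper merely cites.
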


In a classical system, represented by a commutative von Neumann
algebra, the subsets of the phase space with usual set operations
define the logical (Boolean) structure of the system. For a quantum
system, represented by a von Neumann algebra whose phase space is
modeled by the spectral presheaf $\Sigma$, $Sub_{cl}(\Sigma)$
represents the logical structure of the system which is
intuitionistic. We will reefer to $Sub_{cl}(\Sigma)$ as the {\it the
algebra of propositions associated to the spectral presheaf
$\Sigma$}.

\section{Physical properties as modal operators}

In this section we study a class of classical interpretations for
quantum properties when the topos approach to quantum systems is
considered. For this purpose, we use the theory of modal operators
on Heyting algebras. In the orthodox approach, a classical
proposition is usually represented by a Boolean (also called
central) element of an orthomodular lattice \cite{contextual,
contextual2}. In particular, propositions about classical systems
are represented by a Boolean algebra. Suppose that $\mathcal{L}$ is
a lattice representing the propositional structure associated to a
quantum system. A classical interpretation of $\mathcal{L}$ implies
assuming that each $x\in \mathcal{L}$ has a classical complement
$\neg_c x$ and satisfies distributivity conditions in this
interpretation. Then, if $x$ is not a classical proposition in
$\mathcal{L}$, a classical interpretation of $\mathcal{L}$ must, al
least, endow a complement for $x$. Thus, a natural way to
algebraically represent classical interpretations are embeddings of
$\mathcal{L}$ into Boolean algebras, preserving lattice order
structure.

When the properties of a quantum system are encoded in $Sub_{cl}(\Sigma)$, we propose the following general formalization of the concept of classical interpretation for quantum properties:

\begin{definition}\label{LATINT}
{\rm Let $Sub_{cl}(\Sigma)$ be the algebra of propositions associated
to the spectral presheaf $\Sigma$. A {\it classical
interpretation} of the properties about the system is a lattice
embedding ${\cal C}:Sub_{cl}(\Sigma) \hookrightarrow B$ where $B$ is
a Boolean algebra.}
\end{definition}

\noindent Thus ${\cal C}:Sub_{cl}(\Sigma) \hookrightarrow B$
preserves $\lor, \land, 0,1$.  To study this type of classical
interpretation, we introduce the notion of logical consequence and
the notion of modal operator on a Heyting algebra. We note that the
theory of modal operators on Heyting algebras has its main
application in the theory of topoi and sheafication \cite{FREYD,
LAW}.

Let $A$ be a Heyting algebra and $a,b \in A$. We say that {\it $b$ is a logical consequence of $a$} iff $a\leq b$ or equivalently $1 =
a\rightarrow b$. We denote by $[a)$ the set of logical consequences of $a$. We remark that $[a)$ is the principal filter associated to
$a$ in $A$.

\begin{definition} \label{DEFMOD}
{\rm Let $A$ be a Heyting algebra. A modal operator on $A$
\cite{MAC} is a unary operation $j$ such that $$x\leq j(x),
\hspace{0.7cm} jj(x) = j(x), \hspace{0.7cm}  j(x\land y) = j(x)
\land j(y). $$ }
\end{definition}

Let $A$ be a Heyting algebra and $a\in A$. The operation
$\Diamond_a(x) = a\lor x$ defines a modal operator. Modal operators
of this form are known as {\it closed}. The operation
$\Diamond_{a\rightarrow}(x) = a\rightarrow x$ defines another modal
operator on $A$ and modal operators of this second form are known as
{\it open}.

\begin{prop}\label{CONS}
Let $A$ be a Heyting algebra and $a\in A$. Then:

\begin{enumerate}
\item
$Imag(\Diamond_a) = [a)$

\item
$\Diamond_{\neg a}(x) \leq \Diamond_{a\rightarrow}(x)$

\item
$Imag(\Diamond_{a\rightarrow}) \subseteq Imag(\Diamond_{\neg a}) = [\neg a) $

\item
$a$ is a Boolean element in $A$ iff $\Diamond_{\neg a} = \Diamond_{a\rightarrow}$

\end{enumerate}

\end{prop}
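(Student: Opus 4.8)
The plan is to verify each of the four claims using only the defining equations of Heyting algebras, the identity $a \rightarrow b = \bigvee\{x : x \wedge a \leq b\}$, the definition $\neg a = a \rightarrow 0$, and the earlier characterisation that $a$ is Boolean (central) iff $a \vee \neg a = 1$.

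For item (1), I would first observe that $x \leq \Diamond_a(x) = a \vee x$ always, so $a \leq \Diamond_a(x)$ for every $x$, giving $\mathrm{Imag}(\Diamond_a) \subseteq [a)$. Conversely, if $b \in [a)$, i.e.\ $a \leq b$, then $\Diamond_a(b) = a \vee b = b$, so $b$ is in the image; this is the standard fact that a closed modal operator is a reflection onto its principal filter. For item (2), the inequality $\Diamond_{\neg a}(x) = \neg a \vee x \leq a \rightarrow x$ follows because $\neg a = a \rightarrow 0 \leq a \rightarrow x$ (since $0 \leq x$ and $\rightarrow$ is monotone in its right argument) and $x \leq a \rightarrow x$ (a basic adjunction fact, since $x \wedge a \leq x$); then take the join.

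For item (3), the equality $\mathrm{Imag}(\Diamond_{\neg a}) = [\neg a)$ is just item (1) applied to the element $\neg a$. The inclusion $\mathrm{Imag}(\Diamond_{a\rightarrow}) \subseteq [\neg a)$ then amounts to showing $\neg a \leq a \rightarrow x$ for every $x$, which is the first half of the inequality chain used in item (2) (monotonicity of $\rightarrow$ together with $0 \leq x$); hence every element of the form $a \rightarrow x$ lies in the principal filter $[\neg a)$. For item (4), in the forward direction assume $a$ is Boolean, so $\neg\neg a = a$ and $a \vee \neg a = 1$; I would compute $a \rightarrow x = \neg a \vee x$ using the remark in Section~1 that Boolean elements satisfy $x \rightarrow y = \neg x \vee y$ (one checks $\neg a \vee x$ satisfies the defining join property $\bigvee\{z : z \wedge a \leq x\}$ using $a \vee \neg a = 1$ and distributivity), which gives $\Diamond_{a\rightarrow} = \Diamond_{\neg a}$. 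For the converse, if $\Diamond_{a\rightarrow} = \Diamond_{\neg a}$ then in particular evaluating at $x = 0$ gives $a \rightarrow 0 = \neg a \vee 0$, which is the trivial identity $\neg a = \neg a$ and says nothing — so instead I would evaluate at $x = a$: the left side is $a \rightarrow a = 1$, and the right side is $\neg a \vee a$, forcing $a \vee \neg a = 1$, i.e.\ $a$ is central, hence Boolean in $A$.

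The main obstacle I anticipate is item (4), specifically the forward direction: proving $a \rightarrow x = \neg a \vee x$ for a Boolean element $a$. One must show $\neg a \vee x$ is the largest $z$ with $z \wedge a \leq x$. That $(\neg a \vee x) \wedge a \leq x$ follows from distributivity, since $(\neg a \wedge a) \vee (x \wedge a) = 0 \vee (x \wedge a) \leq x$. For maximality, if $z \wedge a \leq x$ then $z = z \wedge 1 = z \wedge (a \vee \neg a) = (z \wedge a) \vee (z \wedge \neg a) \leq x \vee \neg a$, which is exactly where centrality ($a \vee \neg a = 1$) is essential. Everything else is a short monotonicity or adjunction argument; I would present items (1)--(3) compactly and spend the bulk of the written proof on this distributivity computation in item (4).
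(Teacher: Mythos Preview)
Your proposal is correct and follows essentially the same approach as the paper's proof: both arguments establish item (1) by direct inspection, items (2)--(3) via the inequality $\neg a \leq a \rightarrow x$, and item (4) by proving $a \rightarrow x = \neg a \lor x$ from distributivity together with $a \lor \neg a = 1$, then evaluating at $x = a$ for the converse. The only cosmetic differences are that the paper verifies $(\neg a \lor x)\land a \leq x$ directly in item (2) rather than invoking monotonicity of $\rightarrow$, and in item (4) it distributes $\lor$ over $\land$ where you distribute $\land$ over $\lor$; neither changes the substance of the argument.
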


\begin{proof}
Let $x\in A$.

1) $x\in Imag(\Diamond_a)$ iff $x= a\lor t$ for some $t\in A$ iff
$x=a\lor x$ iff $x\in [a)$. \hspace{0.2cm} 2) Note that $a\land
(\neg a \lor x) = (a\land \neg a) \lor (a\land x) = a\land x \leq
x$. Then $\Diamond_{\neg a}(x) = \neg a \lor x \leq a\rightarrow x =
\Diamond_{a\rightarrow}(x)$. \hspace{0.2cm} 3) Since $a\land \neg a
\leq x$, $\neg a \leq a\rightarrow x$. Thus $a\rightarrow x \in
[\neg a)$ and $Imag(\Diamond_{a\rightarrow}) \subseteq
Imag(\Diamond_{\neg a}) = [\neg a) $. \hspace{0.2cm} 4) Suppose that
$a$ is a Boolean element, i.e., $\neg a \lor a =1$. On the one hand,
$x\leq a\rightarrow x$ and $\neg a \leq a\rightarrow x$. Then $\neg
a \lor x \leq a\rightarrow x$. On the other hand, suppose that
$t\land a \leq x$. Then $\neg a \lor x \geq \neg a \lor (t\land a )=
(\neg a \lor t) \land (\neg a \lor a) = (\neg a \lor t) \land 1 =
\neg a \lor t$. In particular $ \neg a \lor (a\rightarrow x) = \neg
a \lor \bigvee_{t\land a \leq x} t \leq \neg a \lor x$. Since $\neg
a \leq a\rightarrow x$, we have that $a\rightarrow x \leq \neg a
\lor x$. Hence $\Diamond_{a \rightarrow}(x) = a\rightarrow x = \neg
a \lor x = \Diamond_{\neg a} (x)$ and $\Diamond_{\neg a} =
\Diamond_{a\rightarrow}$. Now we suppose that $\Diamond_{\neg a} =
\Diamond_{a\rightarrow}$. Then $\neg a \lor a = \Diamond_{\neg a}(a)
= \Diamond_{a\rightarrow}(a) = a\rightarrow a = 1$. Hence $a$ is a
Boolean element in $A$. \qed
\end{proof}\\

The set $M(A)$ of all modal operators on $A$ is partially ordered by
the relation $j_1 \leq j_2$ iff $j_1(x) \leq j_2(x)$ for all $x\in
A$. If $A$ is a complete Heyting algebra, this partial order defines
a complete Heyting algebra structure on $M(A)$  \cite[Theorem
2.3]{MAC}  where $\bigwedge_i j_i$ is given by the operation
$(\bigwedge_i j_i)(x) = \bigwedge_i j_i(x)$. The implication
$j_1\rightarrow j_2$ is given by the operation $(j_1\rightarrow
j_2)(x) = \bigwedge \{j_1(y)\rightarrow j_2(y): y \geq x \}$. Joins
in $M(A)$ are defined as $j_1\lor j_2 = \bigwedge\{j\in M(A):
j_1,j_2 \leq j \} $.

\begin{theo}\label{EMB} {\rm \cite[$\S$ 2.6, $\S$ 2.7]{JOHN}}
Let $A$ be a complete Heyting algebra and $a\in A$ then:

\begin{enumerate}
\item
$\Diamond_a$ is a Boolean element in $M(A)$ and $\Diamond_{a\rightarrow}$ is its complement in $M(A)$.

\item
The map $a\mapsto \Diamond_a$ defines an injective frame homomorphism $A \rightarrow Reg(M(A))$.

\item
$a\mapsto \Diamond_a$ is an isomorphism iff $A$ is a Boolean
algebra.
\end{enumerate}\qed
\end{theo}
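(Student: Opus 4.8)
The plan is to establish the three items in turn, leaning on the structure of $M(A)$ as a complete Heyting algebra and on Proposition \ref{CONS}. For item (1), I would first observe that by the stated formula for joins and the description of $\bigwedge$ in $M(A)$, one needs to show $\Diamond_a \lor \Diamond_{a\rightarrow} = \top$ (the top element of $M(A)$, namely the operator constant at $1$) and $\Diamond_a \land \Diamond_{a\rightarrow} = \bot$ (the identity operator, which is the least modal operator). The meet is computed pointwise: $(\Diamond_a \land \Diamond_{a\rightarrow})(x) = (a\lor x)\land(a\rightarrow x)$, and by the distributive/Heyting identities this equals $(a\land(a\rightarrow x))\lor(x\land(a\rightarrow x)) = (a\land x)\lor x = x$, so the meet is the identity. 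For the join, I would use the implication formula: $(\Diamond_a \rightarrow \bot)(x) = \bigwedge\{(a\lor y)\rightarrow y : y\geq x\}$, and since $(a\lor y)\rightarrow y = a\rightarrow y$ (as $y\rightarrow y = 1$), this infimum over $y\geq x$ is $a\rightarrow x$ because $a\rightarrow(\cdot)$ is order-preserving, giving $\Diamond_a\rightarrow\bot = \Diamond_{a\rightarrow}$. Hence $\Diamond_{a\rightarrow}$ is the pseudocomplement of $\Diamond_a$; combined with the meet computation and the fact that in any Heyting algebra $j \lor \neg j = \top$ holds precisely when $\neg\neg j = j$, I would then check $\neg\neg\Diamond_a = \Diamond_a$ directly, i.e. that $\Diamond_{a\rightarrow}\rightarrow\bot = \Diamond_a$, by the same implication formula: $\bigwedge\{(a\rightarrow y)\rightarrow y : y\geq x\}$, whose value at $y=x$ together with monotonicity yields $a\lor x$. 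This simultaneously shows $\Diamond_a$ is Boolean (regular with complement $\Diamond_{a\rightarrow}$) and identifies the complement.

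For item (2), injectivity is immediate from Proposition \ref{CONS}(1): $\Diamond_a = \Diamond_b$ forces $[a) = [b)$, hence $a = b$ since $a$ is the least element of $[a)$. That $a\mapsto\Diamond_a$ lands in $Reg(M(A))$ follows from item (1). To see it is a frame homomorphism, I would verify it preserves $0$ (sending $0$ to the identity operator $\bot$, correct), preserves $1$ (sending $1$ to the top operator, since $\Diamond_1(x) = 1\lor x = 1$), preserves finite meets ($\Diamond_{a\land b}(x) = (a\land b)\lor x = (a\lor x)\land(b\lor x) = \Diamond_a(x)\land\Diamond_b(x)$ by distributivity, and this pointwise meet is the meet in $M(A)$), and preserves arbitrary joins. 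The last point is the one requiring care: I would show $\Diamond_{\bigvee_i a_i} = \bigvee_i \Diamond_i$ in $M(A)$ by checking the image is exactly $[\bigvee_i a_i)$, or alternatively using that the join in $Reg(M(A))$ — which is a Boolean algebra — is $\neg\neg$ of the pointwise join, and computing that this equals $\Diamond_{\bigvee a_i}$ using completeness of $A$.

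For item (3), one direction is trivial: if $a\mapsto\Diamond_a$ is an isomorphism onto $Reg(M(A))$, then since $M(A)$ is a complete Heyting algebra, $Reg(M(A))$ is a Boolean algebra, so $A$ is Boolean. Conversely, if $A$ is Boolean, then by Proposition \ref{CONS}(4) we have $\Diamond_{\neg a} = \Diamond_{a\rightarrow}$ for every $a$, and I would argue that every modal operator on $A$ is then of the form $\Diamond_a$ for a suitable $a$; indeed, given $j\in M(A)$, set $a = j(0)$, and one checks $j = \Diamond_a$ using that $j(x)\geq j(0)\lor x = a\lor x$ always, while the reverse inequality $j(x)\leq a\lor x$ uses Booleanness — for instance $j(x) = j(x)\land(a\lor\neg a) = (j(x)\land a)\lor(j(x)\land\neg a)$ and $j(x)\land\neg a \leq x$ needs to be extracted from $j(x\land\neg a) = j(x)\land j(\neg a)$ together with $j(\neg a) = j(0)\lor\neg a$-type identities. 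I expect the main obstacle to be exactly this surjectivity argument in item (3): pinning down that $j(x) = j(0)\lor x$ for all $x$ when $A$ is Boolean requires using the multiplicativity $j(x\land y) = j(x)\land j(y)$ in combination with the excluded middle in a slightly delicate way, and one must be careful that $j(0)\lor x$ really is idempotent and multiplicative (which it is, being $\Diamond_{j(0)}$) so that the candidate is legitimate. Everything else reduces to the Heyting-algebra identities already recorded in Section 1 and in Proposition \ref{CONS}.
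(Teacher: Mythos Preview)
The paper does not give its own proof of this theorem: it is stated with a citation to Johnstone's \emph{Stone Spaces}, \S\S 2.6--2.7, and closed immediately with a \qed. So there is no in-paper argument to compare against; your proposal must be judged on its own.

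Your computations of the meet $\Diamond_a \wedge \Diamond_{a\rightarrow} = \mathrm{id}$ and of the pseudocomplement $\neg\Diamond_a = \Diamond_{a\rightarrow}$ are correct. However, the step where you conclude that $\Diamond_a$ is a Boolean (i.e.\ central) element has a genuine gap. You write that ``in any Heyting algebra $j\lor\neg j = \top$ holds precisely when $\neg\neg j = j$''; this is false in general --- regular elements need not be central (the paper itself distinguishes $Reg(A)$ from $Z(A)$ in Section~1). Showing $\neg\neg\Diamond_a = \Diamond_a$ therefore does not by itself give $\Diamond_a \lor \Diamond_{a\rightarrow} = \top$. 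A clean way to close this gap is to use the fixed-point description of nuclei: the join $j_1\lor j_2$ in $M(A)$ is the nucleus whose fixed-point set is $\mathrm{Fix}(j_1)\cap\mathrm{Fix}(j_2)$. Here $\mathrm{Fix}(\Diamond_a) = [a)$ and $\mathrm{Fix}(\Diamond_{a\rightarrow}) = \{x : a\rightarrow x = x\}$; if $x$ lies in both then $x\geq a$ forces $a\rightarrow x = 1$, so $x=1$, and hence the join is the top nucleus. (Also, a minor slip: in your computation of $\neg\Diamond_{a\rightarrow}$, the upper bound for the infimum $\bigwedge_{y\geq x}\bigl((a\rightarrow y)\rightarrow y\bigr)$ is obtained by evaluating at $y=a\lor x$, not at $y=x$; at $y=x$ you only get $(a\rightarrow x)\rightarrow x$, which in a non-Boolean Heyting algebra can strictly exceed $a\lor x$.)

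Items (2) and (3) are essentially fine. For the surjectivity in (3) your sketch can be completed: with $a=j(0)$ one has $j(x)\wedge j(\neg x) = j(0) = a$, hence $j(x)\wedge\neg a \leq \neg j(\neg x) \leq \neg\neg x = x$, which together with $j(x)\geq a\lor x$ gives $j(x)=a\lor x$.
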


\noindent In general, $a\mapsto \Diamond_a$ does not preserve the operation
$\rightarrow$ except in the case in which $A$ is a Boolean algebra.

\begin{definition}
{\rm Let $A$ be a complete Heyting algebra. We define the algebra $A^{\Diamond}$ as the
Boolean subalgebra of $Reg(M(A))$ generated by $\{\Diamond_a,
\Diamond_{a\rightarrow}: a\in A\}$. }
\end{definition}

When considering the properties of the system encoded in $Sub_{cl}(\Sigma)$, the lattice embedding ${\cal C}_0 : Sub_{cl}(\Sigma) \rightarrow Sub_{cl}(\Sigma)^\Diamond$ such that ${\cal C}_0(a) = \Diamond_a$ can be seen as a classical interpretation of the quantum properties. We are interested in giving a meaning to this classical interpretation. To do so, we use the concept of logical consequence presented before Definition \ref{DEFMOD}.

Suppose that $a$ is a quantum property encoded in
$Sub_{cl}(\Sigma)$. Then, by Proposition \ref{CONS}-1, the classical
interpretation of $a$, given by the modal operator $\Diamond_a$, makes
reference to the logical consequences of $a$ in $Sub_{cl}(\Sigma)$.
The Boolean complement of $a$ in $Sub_{cl}(\Sigma)^\Diamond$ given
by $\Diamond_{a\rightarrow}$, by Theorem \ref{EMB}-1 and Proposition
\ref{CONS}-3, makes reference only to the consequences of  $\neg a$
in $Sub_{cl}(\Sigma)$ that have the form $a\rightarrow x$. Note
that, had $a$ been a property that commuted with all other
properties, $a$ would have been a  Boolean element in
$Sub_{cl}(\Sigma)$ and, by Proposition \ref{CONS}-4, the logical
consequences of $\neg a$ would have been of the form  $a\rightarrow
x$; i.e., the following identification would have hold: $\neg a
\approx \Diamond_{\neg a} = \Diamond_{a\rightarrow} $. This means
that thinking of $a$ as a classical property forces us to only
consider as the consequences of $\neg a$ those of the form
$a\rightarrow x$.

A first conclusion is that in the encoding of physical properties in
$Sub_{cl}(\Sigma)$, by Proposition \ref{CONS}-4, a classical
property is distinguished from a non classical one via the form of
the logical consequences of its negation in $Sub_{cl}(\Sigma)$. The
following example may help to make our assertion more clear:

\begin{example}
{\rm Suppose that $a,b \in Sub_{cl}(\Sigma)$ and $\Diamond_b \geq
\Diamond_{a\rightarrow}$. This means that the property $b$ is a
logical consequence of the complement of $a$ in the classical
interpretation $Sub_{cl}(\Sigma)^\Diamond$. Taking into account the
definition of $\Diamond_b$ and $\Diamond_{a\rightarrow}$, the
classical meaning of  $\Diamond_b \geq \Diamond_{a\rightarrow}$ is
that the logical consequences of $\neg a$ of the form $a\rightarrow
x$ have as logical consequence, the logical consequences of $b$ of
the form $b\lor x$. We remark the difference from the case in which
$a$ would have been a classical property. In this case ---in view of
Proposition \ref{CONS}-4, $\Diamond_{a\rightarrow} = \Diamond_{\neg
a}$ and  the meaning of $\Diamond_b \geq
\Diamond_{a\rightarrow}$---, if $p$ is a logical consequence of
$\neg a$ (i.e., $p$ is necessarily of the form $p= a\rightarrow x$
for some $x$) then $b\lor x$ is a logical consequence of $p$.
Clearly there exists a subtle difference between both
interpretations that could lead to contradictions when interpreting
$a$ classically without taking into account the distinction
$\Diamond_{\neg a}$ and $\Diamond_{a\rightarrow}$.}
\end{example}

Until now we have studied the natural meaning of the classical
interpretation ${\cal C}_0 : Sub_{cl}(\Sigma) \rightarrow
Sub_{cl}(\Sigma)^\Diamond$. But in fact ${\cal C}_0$ plays an
important role since it is present  in each possible classical
interpretation ${\cal C}: Sub_{cl}(\Sigma) \rightarrow B$ in the
sense of Definition \ref{LATINT}. The following theorem formally
describes this fact:

\begin{theo}\label{MAX}
Let $B$ be a Boolean algebra and $f:Sub_{cl}(\Sigma)\rightarrow B$
be a classical interpretation. Then there exists a unique injective
Boolean homomorphism $\widehat{f}: Sub_{cl}(\Sigma)^\Diamond
\rightarrow B$ such that the following diagram is commutative:

\begin{center}
\unitlength=1mm
\begin{picture}(20,20)(0,0)
\put(10,16){\vector(3,0){5}}
\put(2,10){\vector(0,-2){5}}
\put(8,4){\vector(1,1){7}}

\put(2,10){\makebox(13,0){$\equiv$}}

\put(2,16){\makebox(0,0){$Sub_{cl}(\Sigma)$}}
\put(20,16){\makebox(0,0){$B$}}
\put(2,0){\makebox(0,0){$Sub_{cl}(\Sigma)^\Diamond$}}
\put(4,20){\makebox(17,0){$f$}}
\put(2,8){\makebox(-6,0){${\cal C}$}}
\put(16,3){\makebox(-4,3){$\widehat{f}$}}
\end{picture}
\end{center}

\end{theo}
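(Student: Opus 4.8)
The strategy is to recognise $Sub_{cl}(\Sigma)^{\Diamond}$, together with the vertical map ${\cal C}={\cal C}_0:a\mapsto\Diamond_a$, as a concrete copy of the \emph{free Boolean extension} of the bounded distributive lattice $A:=Sub_{cl}(\Sigma)$, and then to read the theorem off from the universal property of that extension. Write $\eta:A\hookrightarrow{\cal B}(A)$ for the canonical embedding of $A$ into its free Boolean extension (see \cite{BD}): ${\cal B}(A)$ is generated as a Boolean algebra by $\eta(A)$, and every lattice homomorphism $A\to C$ into a Boolean algebra $C$ preserving $0$ and $1$ factors uniquely through $\eta$ by a Boolean homomorphism. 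By Theorem~\ref{EMB}(2), ${\cal C}_0$ is an injective frame homomorphism $A\to Reg(M(A))$, hence a $0,1$-preserving lattice embedding into a Boolean algebra; and by Theorem~\ref{EMB}(1), $\Diamond_{a\rightarrow}$ is the complement of $\Diamond_a$, so $Sub_{cl}(\Sigma)^{\Diamond}$ — the Boolean subalgebra of $Reg(M(A))$ generated by $\{\Diamond_a,\Diamond_{a\rightarrow}:a\in A\}$ — is already generated by ${\cal C}_0(A)$. Applying the universal property to the embedding ${\cal C}_0$ then yields a Boolean homomorphism $\Phi:{\cal B}(A)\to Reg(M(A))$ with $\Phi\circ\eta={\cal C}_0$ and image $Sub_{cl}(\Sigma)^{\Diamond}$.

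The crux — which I expect to be the main obstacle — is to show $\Phi$ is injective, so that it is an isomorphism from ${\cal B}(A)$ onto $Sub_{cl}(\Sigma)^{\Diamond}$. For this I would use the normal form in a Boolean algebra generated by a sublattice: every element of ${\cal B}(A)$ is a finite join $\bigvee_{i}\bigl(\eta(a_i)\wedge\neg\eta(c_i)\bigr)$ (collapsing meets of generators and of their complements via that $\eta$ preserves $\wedge,\vee,0,1$ and De Morgan), and, using the identity $x\wedge\neg y=x\wedge\neg(x\wedge y)$, one may take $c_i\le a_i$ and discard the summands with $c_i=a_i$, which are $0$. If such an element lies in $\ker\Phi$, then, since a finite join in a Boolean algebra vanishes only when every summand does, each $\Diamond_{a_i}\wedge\neg\Diamond_{c_i}=0$, i.e. $\Diamond_{a_i}\le\Diamond_{c_i}$, i.e. $a_i\le c_i$ because ${\cal C}_0$ reflects the order; together with $c_i\le a_i$ this means no summand survived, so the element was $0$. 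The very same computation, with $f$ in place of ${\cal C}_0$ and using that $f$ is a lattice embedding into the Boolean algebra $B$, shows that the Boolean homomorphism $g:{\cal B}(A)\to B$ obtained by applying the universal property to $f$ (so that $g\circ\eta=f$) is injective as well. (If one prefers not to name ${\cal B}(A)$ explicitly, this is precisely the computation needed to check that the map $\bigvee_i(\Diamond_{a_i}\wedge\neg\Diamond_{c_i})\mapsto\bigvee_i(f(a_i)\wedge\neg f(c_i))$ is well defined.)

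It then remains to assemble the maps. Set $\widehat f:=g\circ\Phi^{-1}:Sub_{cl}(\Sigma)^{\Diamond}\to B$, where $\Phi^{-1}:Sub_{cl}(\Sigma)^{\Diamond}\to{\cal B}(A)$ is the inverse of the isomorphism just obtained. It is an injective Boolean homomorphism, being the composite of an isomorphism with the injective $g$, and $\widehat f\circ{\cal C}_0=g\circ\Phi^{-1}\circ\Phi\circ\eta=g\circ\eta=f$, so the triangle commutes. Finally, uniqueness: if $h:Sub_{cl}(\Sigma)^{\Diamond}\to B$ is any Boolean homomorphism with $h\circ{\cal C}_0=f$, then $h(\Diamond_a)=f(a)=\widehat f(\Diamond_a)$ for every $a\in A$; since $\{\Diamond_a:a\in A\}$ generates $Sub_{cl}(\Sigma)^{\Diamond}$ as a Boolean algebra, Proposition~\ref{SG}(2) forces $h=\widehat f$. \qed
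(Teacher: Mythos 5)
Your proof is correct, but it follows a genuinely different route from the paper's. The paper gets $\widehat{f}$ by a pure injectivity argument: it embeds $B$ into a complete Boolean algebra $B^*$, invokes Theorem~\ref{INJ} (complete Boolean algebras are exactly the injectives among bounded distributive lattices) to extend $f$ along the lattice embedding ${\cal C}:Sub_{cl}(\Sigma)\rightarrow Sub_{cl}(\Sigma)^\Diamond$ to a bounded lattice homomorphism $\widehat{f}:Sub_{cl}(\Sigma)^\Diamond\rightarrow B^*$, then observes that a bounded lattice homomorphism out of a Boolean algebra automatically preserves complements, pulls the image back into $B$ via Proposition~\ref{SG}-1 applied to the generators $\Diamond_a,\ \Diamond_{a\rightarrow}$, and concludes uniqueness from Proposition~\ref{SG}-2 --- exactly the same uniqueness argument you give. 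You instead recognize $Sub_{cl}(\Sigma)^\Diamond$, via ${\cal C}_0$ and Theorem~\ref{EMB}, as a copy of the free Boolean extension ${\cal B}(A)$ of $A=Sub_{cl}(\Sigma)$, and obtain $\widehat{f}$ from the universal property applied to $f$, proving injectivity of both comparison maps by the normal form $\bigvee_i\bigl(\eta(a_i)\wedge\neg\eta(c_i)\bigr)$ with $c_i\leq a_i$ together with the fact that lattice embeddings reflect order. What each approach buys: the paper's argument is shorter given that Theorem~\ref{INJ} was already set up, but it uses the (unstated) embedding of $B$ into a completion and, notably, never actually verifies that $\widehat{f}$ is injective, even though the statement asserts it; your normal-form computation supplies precisely that missing injectivity (of $\Phi$, of $g$, hence of $\widehat{f}=g\circ\Phi^{-1}$), at the cost of invoking the free Boolean extension and its universal property from \cite{BD} rather than the injectivity theorem the paper prepared. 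The only points to state a bit more carefully are routine: that kernel-triviality suffices for injectivity of a Boolean homomorphism, and that the image of $\Phi$ is all of $Sub_{cl}(\Sigma)^\Diamond$ because $\Diamond_{a\rightarrow}=\neg\Diamond_a$ in $Reg(M(A))$, so the generators $\Diamond_{a\rightarrow}$ add nothing beyond ${\cal C}_0(A)$ --- both of which you essentially note.
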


\begin{proof}
Let $f:Sub_{cl}(\Sigma)\rightarrow B$ be an injective lattice
homomorphism. Since $B$ is a Boolean algebra, $B$ can be embedded
into a complete Boolean algebra $B^*$. Thus we can see $B$ as a
Boolean subalgebra of $B^*$. By Theorem \ref{INJ}, $B^*$ is
injective in the variety of bounded distributive lattices. Then
there exists a bounded lattice homomorphism $\widehat{f}:
Sub_{cl}(\Sigma)\Diamond \rightarrow B^*$ such that the following
diagram is commutative:

\begin{center}
\unitlength=1mm
\begin{picture}(20,20)(0,0)
\put(10,16){\vector(3,0){5}}
\put(2,10){\vector(0,-2){5}}
\put(12,2){\vector(2,1){16}}

\put(4,8){\makebox(13,0){$\equiv$}}

\put(2,16){\makebox(0,0){$Sub_{cl}(\Sigma)$}}
\put(20,16){\makebox(0,0){$B$}}
\put(2,0){\makebox(0,0){$Sub_{cl}(\Sigma)^\Diamond$}}
\put(5,20){\makebox(17,0){$f$}}
\put(2,8){\makebox(-6,0){${\cal C}$}}
\put(25,1){\makebox(-4,3){$\widehat{f}$}}

\put(18,20){\makebox(17,0){$1_B$}}
\put(24,16){\vector(3,0){5}}
\put(34,16){\makebox(0,0){$B^*$}}

\end{picture}
\end{center}

We first prove that $\widehat{f}$ preserves complements. Let $x\in
Sub_{cl}(\Sigma)^\Diamond$. Since $1 = \widehat{f}(x\lor \neg x) =
\widehat{f}(x) \lor \widehat{f}(\neg x)$ and $0 = \widehat{f}(x\land
\neg x) = \widehat{f}(x) \land \widehat{f}(\neg x)$ then
$\widehat{f}(\neg x)$ is the complement of $\widehat{f}(x)$ in
$B^*$. Hence $\widehat{f}(\neg x) = \neg \widehat{f}(x)$. Thus
$\widehat{f}$ is a Boolean homomorphism. Now we prove that
$Imag(\widehat{f}) \subseteq B$. If $a\in Sub_{cl}(\Sigma)$ by the
commutativity of the diagram $f(a) = \widehat{f}(\Diamond_a) \in B$.
$\widehat{f}(\Diamond_{a \rightarrow})$ is the complement of
$\widehat{f}(\Diamond_a)$ in $B^*$ and $B$ is a Boolean subalgebra
of $B^*$ containing $f(\Diamond_a)$. Then the complement of
$f(\Diamond_a)$ in $B$ coincides with $\widehat{f}(\Diamond_{a
\rightarrow})$ because the complement is unique in a bounded distributive lattice. Thus $\widehat{f}(\Diamond_{a \rightarrow}) \in B$.
Note that $\{\Diamond_a, \Diamond_{a\rightarrow}: a\in Sub_{cl}(\Sigma)\}$ generates $Sub_{cl}(\Sigma)^\Diamond$. Then, by Proposition
\ref{SG}-1,
\begin{eqnarray*}
\widehat{f}(Sub_{cl}(\Sigma)^\Diamond)  & = & \widehat{f}(G_{Sub_{cl}(\Sigma)^\Diamond}\{\Diamond_a, \Diamond_{a\rightarrow}: a\in Sub_{cl}(\Sigma)\}) \\
& = & G_{B^*}(\{\widehat{f}(\Diamond_a), \widehat{f}(\Diamond_{a\rightarrow}): a\in Sub_{cl}(\Sigma)\})
\end{eqnarray*}
Since $\{\widehat{f}(\Diamond_a), \widehat{f}(\Diamond_{a\rightarrow}): a\in Sub_{cl}(\Sigma)\}
\subseteq B$ then the subalgebra of $B^*$ given by $G_{B^*}(\{\widehat{f}(\Diamond_a),
\widehat{f}(\Diamond_{a\rightarrow}): a\in Sub_{cl}(\Sigma)\})$  is a
Boolean subalgebra of $B$ and  $Imag(\widehat{f}) \subseteq B$. It
proves that

\begin{center}
\unitlength=1mm
\begin{picture}(20,20)(0,0)
\put(10,16){\vector(3,0){5}}
\put(2,10){\vector(0,-2){5}}
\put(8,4){\vector(1,1){7}}

\put(2,10){\makebox(13,0){$\equiv$}}

\put(2,16){\makebox(0,0){$Sub_{cl}(\Sigma)$}}
\put(20,16){\makebox(0,0){$B$}}
\put(2,0){\makebox(0,0){$Sub_{cl}(\Sigma)^\Diamond$}}
\put(4,20){\makebox(17,0){$f$}}
\put(2,8){\makebox(-6,0){${\cal C}$}}
\put(16,3){\makebox(-4,3){$\widehat{f}$}}
\end{picture}
\end{center}

\noindent By Proposition \ref{SG}-2, $\widehat{f}$ is the unique
Boolean homomorphism that makes commutative the diagram.
\qed
\end{proof}\\

The classical interpretation ${\cal C}_0 : Sub_{cl}(\Sigma)
\rightarrow Sub_{cl}(\Sigma)^\Diamond$ may be associated to a piece
of the classical language that describes some facts regarding the
logical consequences of the propositions about the system. Theorem
\ref{MAX} expresses the fact that any other classical interpretation
in the sense of Definition \ref{LATINT} only represents a  classical
enrichment of the language associated to
$Sub_{cl}(\Sigma)^\Diamond$. In other words, a classical
interpretation would represent an increasing of the expressive power
of the piece of the classical language associated to
$Sub_{cl}(\Sigma)^\Diamond$ that describes aspects of the logical
consequences of the propositions. Thus we may say that the classical
interpretations in the sense of Definition \ref{LATINT} only
describe semantic aspects of the logic of phase spaces.

\section{Conclusions}

The categorical approach to QM in the sense of \cite{DI, isham, HLS}
allows to establish classical interpretations of quantum properties.
We have rigorously described these interpretations in terms of modal
operators on Heyting algebras. When $a$ is a quantum property, its
classical interpretation given by the modal operator $\Diamond a$,
makes reference to the logical consequences of $a$ in
$Sub_{cl}(\Sigma)$. Its complement in $Sub_{cl}(\Sigma)^\Diamond$ is
given by $\Diamond_{a\rightarrow}$ and makes reference only to the
consequences of $\neg a$ that have the form $a\rightarrow x$.  Had
$a$ been a classical property, these would have been all the
consequences. But when $a$ is a genuine quantum property some of its
consequences are lacking.

\section*{Acknowledgements}

This work was partially supported by the following grants: VUB
project GOA67; FWO-research community W0.030.06; CONICET RES.
4541-12 (2013-2014) and  PIP 112-201101-00636, CONICET.

\end{document}